\newtheorem{mydef}{Definition}
\newtheorem{mylemma}{Lemma}
\newtheorem{proof}{Proof}
\newcommand{\qed}{\nobreak \ifvmode \relax \else
      \ifdim\lastskip<1.5em \hskip-\lastskip
      \hskip1.5em plus0em minus0.5em \fi \nobreak
      \vrule height0.75em width0.5em depth0.25em\fi}
\newcommand{\myvspace}{\vspace{1mm}}
\begin{document}
%
% paper title
% Titles are generally capitalized except for words such as a, an, and, as,
% at, but, by, for, in, nor, of, on, or, the, to and up, which are usually
% not capitalized unless they are the first or last word of the title.
% Linebreaks \\ can be used within to get better formatting as desired.
% Do not put math or special symbols in the title.
\title{Flow of Information in Feed-Forward Deep Neural Networks}

% author names and affiliations
% use a multiple column layout for up to three different
% affiliations
\author{\IEEEauthorblockN{Pejman Khadivi}
\IEEEauthorblockA{Computer Science Department\\
Virginia Tech, Virginia, USA\\
Email: pejman@vt.edu}
\and
\IEEEauthorblockN{Ravi Tandon}
\IEEEauthorblockA{Electrical and Computer Engineering Department\\
University of Arizona,
Arizona, USA\\
Email: tandonr@email.arizona.edu }
\and
\IEEEauthorblockN{Naren Ramakrishnan}
\IEEEauthorblockA{Computer Science Department\\
Virginia Tech, Virginia, USA\\
Email: naren@cs.vt.edu}}

% conference papers do not typically use \thanks and this command
% is locked out in conference mode. If really needed, such as for
% the acknowledgment of grants, issue a \IEEEoverridecommandlockouts
% after \documentclass

% for over three affiliations, or if they all won't fit within the width
% of the page, use this alternative format:
% 
%\author{\IEEEauthorblockN{Michael Shell\IEEEauthorrefmark{1},
%Homer Simpson\IEEEauthorrefmark{2},
%James Kirk\IEEEauthorrefmark{3}, 
%Montgomery Scott\IEEEauthorrefmark{3} and
%Eldon Tyrell\IEEEauthorrefmark{4}}
%\IEEEauthorblockA{\IEEEauthorrefmark{1}School of Electrical and Computer Engineering\\
%Georgia Institute of Technology,
%Atlanta, Georgia 30332--0250\\ Email: see http://www.michaelshell.org/contact.html}
%\IEEEauthorblockA{\IEEEauthorrefmark{2}Twentieth Century Fox, Springfield, USA\\
%Email: homer@thesimpsons.com}
%\IEEEauthorblockA{\IEEEauthorrefmark{3}Starfleet Academy, San Francisco, California 96678-2391\\
%Telephone: (800) 555--1212, Fax: (888) 555--1212}
%\IEEEauthorblockA{\IEEEauthorrefmark{4}Tyrell Inc., 123 Replicant Street, Los Angeles, California 90210--4321}}

% use for special paper notices
%\IEEEspecialpapernotice{(Invited Paper)}

% make the title area
\maketitle

% As a general rule, do not put math, special symbols or citations
% in the abstract
\begin{abstract}
Feed-forward deep neural networks have been used extensively in various machine learning applications. Developing  a  precise  understanding  of  the  underling behavior of neural networks is crucial for their efficient deployment. In this paper, we use an information theoretic approach to study the flow of information in a neural network and to determine how entropy of information changes between consecutive layers. Moreover, using the Information Bottleneck principle, we develop a constrained optimization problem that can be used in the training process of a deep neural network. Furthermore, we determine a lower bound for the level of data representation that can be achieved in a deep neural network with an acceptable level of distortion.
\end{abstract}

% no keywords

% For peer review papers, you can put extra information on the cover
% page as needed:
% \ifCLASSOPTIONpeerreview
% \begin{center} \bfseries EDICS Category: 3-BBND \end{center}
% \fi
%
% For peerreview papers, this IEEEtran command inserts a page break and
% creates the second title. It will be ignored for other modes.
\IEEEpeerreviewmaketitle

\section{Introduction}
With the increasing demand for data analytics, Big Data, and artificial intelligence, efficient machine learning algorithms are required now more than anytime before \cite{chen14}. Deep learning and deep neural networks (DNNs) have been shown to be among the most efficient machine learning paradigms, specifically for supervised learning tasks. Due to their fascinating performance, different deep learning structures have been deployed in various applications in the past decades \cite{chen14,sun14,lenz15}. However, despite of their great performance, more theoretical effort is required to understand the dynamic behavior of DNNs both from learning and design perspectives.

Deep neural networks are considered as multi-layer structures, constructed by simple processing units known as neurons that process the input information to generate a desired output \cite{tom97,nntr14}. These structures have been used previously in a variety of applications, such as dimensionality reduction \cite{hinton06}, face representation \cite{sun14}, robotic grasps detection \cite{lenz15}, and object detection \cite{erhan13}. %While different structures have been developed for DNNs and they have been deployed both in supervised and unsupervised learning applications, in this paper, we consider the multi-layer feed-forward structure and we assume that the network is used in a supervised setting.

While DNNs have shown their capability in solving machine learning problems, they have been traditionally deployed in a heuristic manner \cite{tishby15,bishop96}. However, to be able to use these structures more efficiently, we need to have a deeper understanding of their underling dynamic behavior \cite{bishop96}.

Mehta and Schwab shown in \cite{mehta14} that deep learning is related to renormalization groups in theoretical physics and provide a mapping between deep learning methods and variational renormalization groups using  Restricted Boltzmann Machines. 
In \cite{tishby15}, Tishby and Zaslavsky proposed a theoretical framework, based on the principle of Information Bottleneck \cite{tishby99}, to analyze the DNNs where, the ultimate goal of deep learning has been formulated as a trade-off between compression and prediction. In \cite{tishby15}, the authors claim that an optimal point exists on the compression-distortion plane that can efficiently address that trade-off. Moreover, they suggest that an Information Bottleneck based learning algorithm may achieve the optimal information representation.

In this paper, we analyze the flow of information in a deep neural network using an information theoretic approach. While different structures have been developed for DNNs, in this paper, we consider the multi-layer feed-forward structure and we assume that the network is used in a supervised setting. We determine an upper bound on the total compression rate in a neural network that can be achieved with an acceptable level of distortion. Furthermore, using the fundamental concepts of Information Bottleneck and based on the approach of Tishby and Zaslavsky in \cite{tishby15}, we develop an optimization problem that can be used in the learning process of DNNs. A case study supports the justifications of the paper. Thus, our contributions and the structure of the paper are as follows:
\begin{itemize}
\item In Section II, we focus on the information flow across any two consecutive layers of a DNN by characterizing the relative change in the entropy across layers and also developing some properties of the same.  
%In Section \ref{sec:flow}, using two lemmas, we show how the amount of entropy is changed between any two consecutive layers of a DNN.
\item In Section III, motivated by the Information Bottleneck principle, we define an optimization problem for training a DNN, in which the goal is to minimize the overall log-loss distortion. Moreover, we prove an upper bound on the total data compression which is achievable in a DNN with an acceptable level of distortion. 
%In Section \ref{sec:ibopt}, based on the Information Bottleneck principle, we define an optimization problem for training a DNN. Moreover, we prove an upper bound on the total data compression which is achievable in a DNN with an acceptable level of distortion.
\item In Section IV we modify the optimization problem of Section III to address the practical limitations of neural computation. We first illustrate that how the results of the original optimization model may be unfeasible and then, by adding sufficient constraints to the model we propose a modified optimization problem that can be used in the training process of a DNN. 
%In Section \ref{sec:feaopt} we modified the optimization problem of Section \ref{sec:ibopt} to address the limitations of neural computation. In this section, we first illustrate that how the results of the original optimization model may be unfeasible and then, by adding sufficient constraints to the model we propose a modified optimization problem that can be used in the training process of a DNN.
\end{itemize}

\begin{figure}[t]
\centering
\includegraphics[trim=0.2cm 0.4cm 0.4cm 0.1cm, clip=true,width=0.4\textwidth]{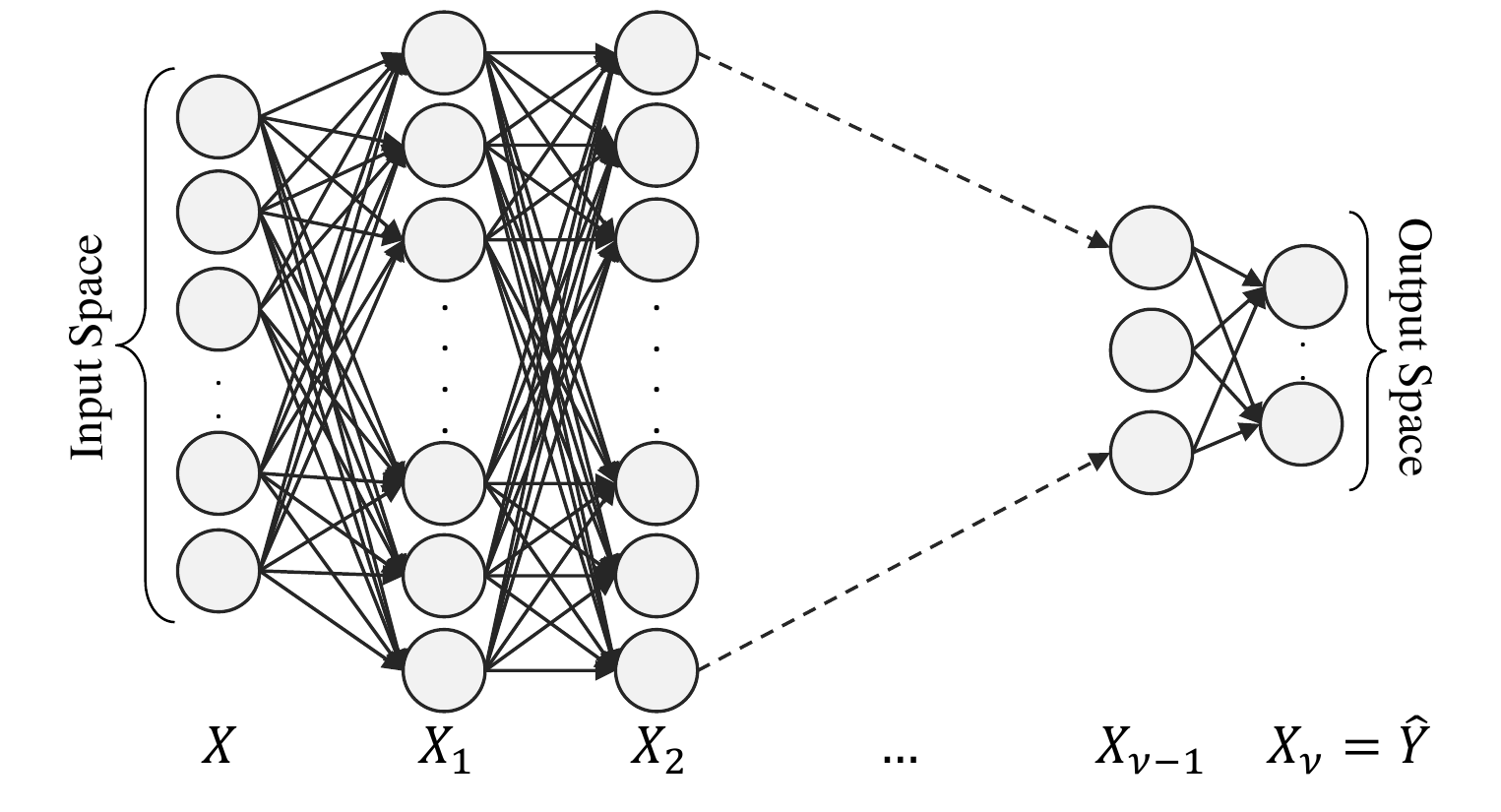}
\caption{General structure of a feed-forward deep neural network.}
\label{fig:1}
\vspace{-5mm}
\end{figure}

\section{Flow of Entropy in Deep Neural Networks}
\label{sec:flow}
The typical structure of a feed-forward DNN is illustrated in Fig \ref{fig:1}. In this figure, input layer is represented by $X$ and the representation of data in the $i^{th}$ hidden layer is shown by $X_i$. We assume that the network has $\nu$ layers and the output layer, i.e. $X_{\nu}$, should estimate a desired output, $Y$. Each layer is constructed by multiple neurons that process the information in parallel and the output of the $j^{th}$ neuron in layer $n$ is calculated as follows:
\begin{equation}
\label{eq:neuron}
x_{n,j} = g^n_j(x_{n-1}) = f^n\left (\sum_{i=1}^{m_{n-1}}{w^{n}_{i,j} x_{n-1,i}}+b^{n}_{j}  \right )
\end{equation}
where $m_{n-1}$ is the number of neurons in layer $n-1$ and $w^{n}_{i,j}$ is a weight that connects the output of the $i^{th}$ neuron in layer $n-1$ to the input of the $j^{th}$ neuron in layer $n$. Also, $b^{n}_{j}$ is the bias of the $j^{th}$ neuron in layer $n$ and $f^n(\cdot)$ is the output function of the neurons in this layer. To illustrate (\ref{eq:neuron}) in vector notation we say:
\begin{equation}
\mathbf{x}_{n} = G^n(\mathbf{x}_{n-1})
\end{equation}
where $\mathbf{x}_{n}$ is a combination of neuron outputs in layer $n$, i.e. $\mathbf{x^n} = [x_{n,1}, \cdots, x_{n,m_n}]$. The total number of possible output combinations in the $n^{th}$ layer is illustrated by $k_n$ which depends on the output functions and the input space. As an example, with binary output functions, $k_n = 2^{m_n}$. At the $n^{th}$ layer, $\mathcal{X}_n = \{\mathbf{x^n_1},\cdots,\mathbf{x^n_{k_n}}\}$ is the set of all possible output combinations. It should be noted that a single neuron has a very limited processing capability. As a matter of fact, an individual neuron can only implement a hyper-plane in its input space and hence, not all the mappings from its input to its output are feasible. This limitation is one of the main reasons that neural networks are constructed by multiple layers of interacting neurons. 

The neural computation that occurs in each layer performs a mapping between the outputs of the consecutive layers.
In other words, various output combinations in layer $n-1$ are mapped to certain output combinations in layer $n$. Depending on the weights, bias, and the output function in layer $n$, there are two possibilities:
\begin{itemize}
\item Each unique combination of neuron outputs in layer $n-1$ is uniquely mapped to a combination of neuron outputs in layer $n$. In other words:
$$
\forall \mathbf{x_1},\mathbf{x_2}\in \mathcal{X}_{n-1} \therefore G^n(\mathbf{x_1})\neq G^n(\mathbf{x_2})
$$
In this case, regardless of the number of neurons in each layer, we have $k_n=k_{n-1}$.

%In this case, if the total number of possible output combinations in layer $n-1$ is $k_{n-1}$, then regardless of the number of neurons, we have exactly $k_{n-1}$ possible combinations in layer $n$, with the same probability distribution.
\item Multiple (at least two) combinations of neuron outputs in layer $n-1$ are mapped to a single combination of neuron outputs in layer $n$. In other words:
$$
\exists \mathbf{x_1},\mathbf{x_2}\in \mathcal{X}_{n-1} \therefore G^n(\mathbf{x_1}) = G^n(\mathbf{x_1})
$$
and in this case, we have $k_{n}<k_{n-1}$.
%In this case, if the total number of possible output combinations in layer $n-1$ is $k_{n-1}$, then we have $k_{n}$ possible combination in layer $n$ such that $k_{n}<k_{n-1}$.
\end{itemize}
It worth mentioning that the mapping in each layer of DNN is also a partitioning of the layer's input space and hence, what a DNN does is multiple consecutive of partitioning and mapping processes with goal of estimating of a desired output.

\myvspace
\begin{mydef}
In the $n^{th}$ layer of a feed-forward DNN, \textit{layer partition} is the partitioning of the $n^{th}$ layer's input space which occurs due to the neural computations performed in layer $n$. We illustrate this partitioning by $\mathcal{S}_n = \{S^n_1,\cdots,S^n_{k_n}\}$ where, $S^n_j$ is the set of all the output combinations in layer $n-1$ that are mapped to the $j^{th}$ output combination in layer $n$, i.e. $\mathbf{x^{n}_{j}}$. In other words,
$$
S^n_j = \{\,\mathbf{x^{n-1}_i} \in \mathcal{X}_{n-1} \,\,|\quad \mathbf{x^{n}_{j}}=G^n(\mathbf{x^{n-1}_i}) \, \}
$$
Note that we have:
$$
i \neq j \Rightarrow S^n_i \cap S^n_j = \emptyset \quad \text{and} \quad
\bigcup_{j=1}^{k_n}S^n_j = \mathcal{X}_{n-1}.
$$
\end{mydef}

Let us assume that $P_{n}(\mathbf{x^n_j})$ is the probability of $\mathbf{x^n_j}$. Then, it can be observed that
\begin{equation}
P_n(\mathbf{x^n_j}) = \sum_{\mathbf{x'}\in S^n_j}{P_{n-1}(\mathbf{x'})}
\end{equation}
Furthermore, considering this fact that $\mathcal{S}_n$ is a partitioning of $\mathcal{X}_{n-1}$, one can easily show that $P_n(\mathbf{x^n_j})$ is the probability of partition $S^n_j$. It can be shown that 
\begin{equation}
\mathbf{\Pi}^n_j = \left\{\forall \mathbf{x}\in S^n_j \,|\,  \frac{P_{n-1}(\mathbf{x})}{\sum_{\mathbf{x'}\in S^n_j}{P_{n-1}(\mathbf{x'})}} \right\}
\end{equation}
is the probability distribution of all the combinations in $S^n_j$. 

\myvspace
\begin{mydef}\label{df:entropy}
In a feed-forward DNN, the \textit{entropy of layer} $n$, $H(X_{n})$, is the entropy of the neuron outputs at this layer and we have:
\begin{equation*}
H(X_{n}) = -\sum_{i=1}^{k_{n}}{P_{n}(\mathbf{x^{n}_i}) \log{P_{n}(\mathbf{x^{n}_i})}} .
\end{equation*}
\end{mydef}

\myvspace
\begin{mydef}\label{df:entropy2}
The \textit{entropy of partition} $S^n_j$, $H(S^n_j)$, is the entropy of the output combinations that belong to $S^n_j$. In other words:
$$
H(S^n_j) = - \sum_{\mathbf{x}\in S^n_j} \frac{P_{n-1}(\mathbf{x})}{\sum_{\mathbf{x'}\in S^n_j}{P_{n-1}(\mathbf{x'})}} \log \frac{P_{n-1}(\mathbf{x})}{\sum_{\mathbf{x'}\in S^n_j}{P_{n-1}(\mathbf{x'})}}
$$
\end{mydef}

In the following lemma we show how entropy of information changes in a feed-forward neural network.

\myvspace
\begin{mylemma}\label{lm:1}
In a feed-forward DNN, the entropy of information that flows from layer $n-1$ to layer $n$ is decreased by the expected entropy of the partitions on the possible output combinations in layer $n-1$. The amount of this reduction is shown by $\Delta_n$ and we have
\begin{equation}
\label{eq:delta}
\Delta_n = H(X_{n-1})-H(X_n) = \sum_{j=1}^{k_{n}} P_n(\mathbf{x^n_j}) H(S^n_j).
\end{equation}
%\myvspace
\end{mylemma}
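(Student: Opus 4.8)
The plan is to recognize (\ref{eq:delta}) as the grouping (chain-rule) property of Shannon entropy applied to the deterministic many-to-one map $G^n$, and to establish it by a direct algebraic rearrangement of $H(X_{n-1})$. Concretely, I would first restate the claim in the equivalent additive form
$$
H(X_{n-1}) = H(X_n) + \sum_{j=1}^{k_n} P_n(\mathbf{x^n_j})\, H(S^n_j),
$$
which is merely (\ref{eq:delta}) with the terms moved across, and then derive this equality.

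First I would regroup the defining sum of $H(X_{n-1})$ from Definition \ref{df:entropy} according to the partition cells. Since $\mathcal{S}_n = \{S^n_j\}_{j=1}^{k_n}$ is a partition of $\mathcal{X}_{n-1}$, every output combination of layer $n-1$ lies in exactly one cell, so the single sum over $\mathcal{X}_{n-1}$ may be written as a double sum over cells and their members:
$$
H(X_{n-1}) = -\sum_{j=1}^{k_n} \sum_{\mathbf{x}\in S^n_j} P_{n-1}(\mathbf{x}) \log P_{n-1}(\mathbf{x}).
$$
Next, for each $\mathbf{x}\in S^n_j$ I would factor the probability using $P_n(\mathbf{x^n_j}) = \sum_{\mathbf{x'}\in S^n_j} P_{n-1}(\mathbf{x'})$, writing $P_{n-1}(\mathbf{x}) = P_n(\mathbf{x^n_j})\cdot \frac{P_{n-1}(\mathbf{x})}{P_n(\mathbf{x^n_j})}$. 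The second factor is exactly the conditional distribution $\mathbf{\Pi}^n_j$, which sums to one over each cell. Applying $\log(ab)=\log a+\log b$ then splits $H(X_{n-1})$ into two sums.

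The first of these sums collapses to $-\sum_{j} P_n(\mathbf{x^n_j}) \log P_n(\mathbf{x^n_j}) = H(X_n)$, because on each cell the conditional probabilities sum to unity and therefore factor out the term $\log P_n(\mathbf{x^n_j})$ weighted by $P_n(\mathbf{x^n_j})$. The second sum is precisely $\sum_{j} P_n(\mathbf{x^n_j})\, H(S^n_j)$ by Definition \ref{df:entropy2}, since $-\sum_{\mathbf{x}\in S^n_j} \frac{P_{n-1}(\mathbf{x})}{P_n(\mathbf{x^n_j})} \log \frac{P_{n-1}(\mathbf{x})}{P_n(\mathbf{x^n_j})}$ is exactly $H(S^n_j)$. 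Rearranging the resulting identity yields (\ref{eq:delta}).

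The computation is essentially routine, so I do not expect a genuine obstacle; the only points requiring care are at the bookkeeping level. I would adopt the convention $0\log 0 = 0$ so that zero-probability combinations and singleton cells (for which $H(S^n_j)=0$) cause no difficulty, and I would keep the conditional normalization $\sum_{\mathbf{x}\in S^n_j}\frac{P_{n-1}(\mathbf{x})}{P_n(\mathbf{x^n_j})}=1$ explicit so that the two sums separate cleanly. The conceptual content of the lemma is simply that a deterministic map between layers removes exactly the conditional (within-cell) entropy of the pre-images, which is the discrete grouping axiom for Shannon entropy.
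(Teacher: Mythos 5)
Your proposal is correct and follows essentially the same route as the paper: both arguments are a direct algebraic verification of the grouping property of entropy, regrouping the sum defining $H(X_{n-1})$ over the partition cells $S^n_j$, splitting the logarithm via $P_{n-1}(\mathbf{x}) = P_n(\mathbf{x^n_j})\cdot\frac{P_{n-1}(\mathbf{x})}{P_n(\mathbf{x^n_j})}$, and identifying the two resulting sums as $H(X_n)$ and the expected partition entropy. The paper merely presents the same computation in the opposite direction (expressing $H(X_n)$ as $H(X_{n-1})$ plus correction terms for each merged cell), so there is no substantive difference.
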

\begin{proof}
Let us assume that the information has been processed up to layer $X_{n-1}$. Using Definition \ref{df:entropy}, the entropy of layer $n-1$ is
\begin{equation}
\label{eq:ent1}
H(X_{n-1}) = -\sum_{i=1}^{k_{n-1}}{P_{n-1}(\mathbf{x^{n-1}_i}) \log{P_{n-1}(\mathbf{x^{n-1}_i})}} 
\end{equation}
To determine $H(X_n)$, we can say
\begin{align}
\label{eq:ent2}
\nonumber
H(X_n) = & - \left \{\sum_{i=1}^{k_{n-1}}{P_{n-1}(\mathbf{x^{n-1}_i}) \log{P_{n-1}(\mathbf{x^{n-1}_i})}} \right.\\ \nonumber
& +\sum_{j=1}^{k_n}\left [ \left (\sum_{\mathbf{x}\in S^n_j}{P_{n-1}(\mathbf{x})}  \right )\log{\left(\sum_{\mathbf{x}\in S^n_j}{P_{n-1}(\mathbf{x}})\right)} \right.\\ 
& -\left.\left. \left(\sum_{\mathbf{x}\in S^n_j}{P_{n-1}(\mathbf{x}) \log P_{n-1}(\mathbf{x})}\right) \right ]  \right \} 
\end{align}
In other words, we started from the entropy of $X_{n-1}$ and substituted the individual terms of all the output combinations that belong to $S^n_j$ with their new equivalent term, i.e. $\left(\sum_{\mathbf{x}\in S^n_j}{P_{n-1}(\mathbf{x})}\right)\log{\left(\sum_{\mathbf{x}\in S^n_j}{P_{n-1}(\mathbf{x}})\right)}$. On the other hand, we have:
\begin{align*}
& \left (\sum_{\mathbf{x}\in S^n_j}{P_{n-1}(\mathbf{x})}  \right )\log{\left(\sum_{\mathbf{x}\in S^n_j}{P_{n-1}(\mathbf{x}})\right)} \\
& - \left(\sum_{\mathbf{x}\in S^n_j}{P_{n-1}(\mathbf{x}) \log P_{n-1}(\mathbf{x})}\right) \\
& = - \sum_{\mathbf{x}\in S^n_j} P_{n-1}(\mathbf{x}) \log \frac{P_{n-1}(\mathbf{x})}{\sum_{\mathbf{x'}\in S^n_j}{P_{n-1}(\mathbf{x'})}}
\end{align*}
then, considering (\ref{eq:ent1}), we can rewrite (\ref{eq:ent2}) as follows:
\begin{align}
\nonumber
H(X_n) &= H(X_{n-1})\\
& + \sum_{j=1}^{k_n}\sum_{\mathbf{x}\in S^n_j} P_{n-1}(\mathbf{x}) \log \frac{P_{n-1}(\mathbf{x})}{\sum_{\mathbf{x'}\in S^n_j}{P_{n-1}(\mathbf{x'})}}
\end{align}
Equivalently, we have
\begin{align}
\label{eq:ent3}
\nonumber
& H(X_n) = H(X_{n-1}) + \sum_{j=1}^{k_n} \left(\sum_{\mathbf{x'}\in S^n_j}{P_{n-1}(\mathbf{x'})} \right)\\
& \left(\sum_{\mathbf{x}\in S^n_j} \frac{P_{n-1}(\mathbf{x})}{\sum_{\mathbf{x'}\in S^n_j}{P_{n-1}(\mathbf{x'})}} \log \frac{P_{n-1}(\mathbf{x})}{\sum_{\mathbf{x'}\in S^n_j}{P_{n-1}(\mathbf{x'})}}\right)
\end{align}
Then, based on Definition \ref{df:entropy2} we can say:
\begin{equation}
H(X_n) = H(X_{n-1}) - \sum_{j=1}^{k_{n}} P_n(\mathbf{x^n_j}) H(S^n_j)
\end{equation}
and from here, we can observe that $H(X_n) \leq H(X_{n-1})$. Furthermore, the difference between the entropy in layers $n-1$ and $n$ is $\sum_{j=1}^{k_{n}} P_n(\mathbf{x^n_j}) H(S^n_j)$ which is the expected entropy of the partitions on $\mathcal{X}_{n-1}$. This proves the lemma.\qed
\myvspace
\end{proof}

The following lemma proves a similar result for the flow of conditional entropy in a deep neural network.

\myvspace
\begin{mylemma}\label{lm:2}
In a feed-forward DNN, the conditional entropy of each layer, conditioned on the desired outputs, $Y$, i.e. $H(X_n|Y)$, is a non-increasing function of $n$ and we have:
\begin{align}
\nonumber
\Delta_n' &= H(X_{n-1}|Y)-H(X_n|Y)\\
& = \sum_{y\in \mathcal{Y}}{\sum_{j=1}^{k_n}{P_Y(y)P_n(\mathbf{x^n_j}|Y=y) H(S^n_j|y)}}
\end{align}
where
$$
P_n(\mathbf{x^n_j}|Y=y) = \sum_{\mathbf{x}\in S^n_j} P_{n-1}(\mathbf{x}|Y=y)
$$
and
$$
H(S^n_j|y) = -\sum_{\mathbf{x}\in S^n_j} \frac{P_{n-1}(\mathbf{x}|Y=y)}{P_n(\mathbf{x^n_j}|Y=y)}
\log \frac{P_{n-1}(\mathbf{x}|Y=y)}{P_n(\mathbf{x^n_j}|Y=y)}.
$$
%\begin{align*}
%H(S^n_j|Y=y) &= -\sum_{\mathbf{x}\in S^n_j} \left( \frac{P_{n-1}(\mathbf{x}|Y=y)}{\sum_{\mathbf{x'}\in S^n_j} P_{n-1}(\mathbf{x'}|Y=y)} \right)\\
%& \log\left(\frac{P_{n-1}(\mathbf{x}|Y=y)}{\sum_{\mathbf{x'}\in S^n_j} P_{n-1}(\mathbf{x'}|Y=y)} \right).
%\end{align*}
\end{mylemma}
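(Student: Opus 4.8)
The plan is to reduce the claim to the unconditional result already established in Lemma~\ref{lm:1} by conditioning on each fixed value of the desired output. The starting point is the standard decomposition of conditional entropy as an average over $Y$, namely
\begin{equation*}
H(X_n \mid Y) = \sum_{y \in \mathcal{Y}} P_Y(y)\, H(X_n \mid Y = y),
\end{equation*}
and likewise for $H(X_{n-1}\mid Y)$. Subtracting these two expansions term by term, it suffices to prove, for every fixed $y$, the per-slice identity
\begin{equation*}
H(X_{n-1}\mid Y=y) - H(X_n \mid Y=y) = \sum_{j=1}^{k_n} P_n(\mathbf{x^n_j}\mid Y=y)\, H(S^n_j \mid y),
\end{equation*}
after which multiplying by $P_Y(y)$ and summing over $\mathcal{Y}$ yields the stated formula for $\Delta_n'$.

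The key observation that makes this reduction legitimate is that the layer partition $\mathcal{S}_n$ is induced by the deterministic map $G^n$ and therefore does not depend on the conditioning value $y$: the block $S^n_j$ of inputs routed to $\mathbf{x^n_j}$ is fixed once the weights, biases, and output functions of layer $n$ are fixed. Hence, for each fixed $y$, the conditional distribution $P_{n-1}(\cdot \mid Y=y)$ on $\mathcal{X}_{n-1}$ and its push-forward $P_n(\cdot \mid Y=y)$ on $\mathcal{X}_n$ stand in exactly the same relationship as $P_{n-1}$ and $P_n$ do in Lemma~\ref{lm:1}. In particular, since $\{X_n = \mathbf{x^n_j}\}$ is precisely the event $\{X_{n-1}\in S^n_j\}$, conditioning on $Y=y$ and summing over the block gives
\begin{equation*}
P_n(\mathbf{x^n_j}\mid Y=y) = \sum_{\mathbf{x}\in S^n_j} P_{n-1}(\mathbf{x}\mid Y=y),
\end{equation*}
which is exactly the relation asserted in the lemma.

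Given this, I would simply replay the algebraic grouping argument used in the proof of Lemma~\ref{lm:1}, with every occurrence of $P_{n-1}(\mathbf{x})$ replaced by $P_{n-1}(\mathbf{x}\mid Y=y)$ and every $P_n(\mathbf{x^n_j})$ replaced by $P_n(\mathbf{x^n_j}\mid Y=y)$. The same cancellation that converted the difference of entropies into an expected block entropy now produces the per-slice identity above, with $H(S^n_j\mid y)$ playing the role of $H(S^n_j)$. Weighting by $P_Y(y)$, summing over $y$, and recombining the two averages into $H(X_{n-1}\mid Y)$ and $H(X_n\mid Y)$ delivers the claimed expression for $\Delta_n'$. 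Finally, because each $H(S^n_j\mid y)\ge 0$ and all probabilities are nonnegative, the right-hand side is nonnegative, so $H(X_n\mid Y)\le H(X_{n-1}\mid Y)$ and the conditional entropy is non-increasing in $n$.

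The argument is essentially a reuse of Lemma~\ref{lm:1}, so the only point requiring genuine care --- the main, if modest, obstacle --- is justifying that the partition blocks $S^n_j$ are identical across all conditioning values $y$, i.e.\ that conditioning on $Y$ changes only the probability weights and not the combinatorial structure of the map. This rests on $G^n$ being deterministic and independent of $Y$; once that is noted, the per-slice computation is routine and the remainder follows from linearity of expectation.
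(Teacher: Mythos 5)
Your proposal is correct and matches the paper's approach exactly: the paper's own proof is simply the remark that the result ``follows on similar lines as Lemma 1 by conditioning on the random variable $Y$,'' which is precisely the reduction you carry out, including the key observation that the partition $\mathcal{S}_n$ is independent of the conditioning value $y$.
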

\begin{proof}
The proof of Lemma 2 follows on similar lines as Lemma 1 by conditioning on the random variable Y and is therefore omitted.\qed
\myvspace
\end{proof}

\section{Optimal Mapping and Information Bottleneck}
\label{sec:ibopt}
Extraction of relevant information from an input data with respect to a desired output is one of the central issues in supervised learning \cite{dietterich2002}. In information theoretic terms, assuming that $X$ and $Y$ are the input and output random variables, $I(X;Y)$ is the relevant information between $X$ and $Y$. In order to improve the efficiency of a machine learning task, we generally need to extract the minimal sufficient statistics of $X$ with respect to $Y$. Hence, as the Information Bottleneck principle \cite{tishby99} indicates, we need to find a maximally compressed representation of $X$ by extracting the relevant information with respect to $Y$ \cite{tishby15}. In this section, we follow the approach of \cite{tishby15} to define an optimization problem that can be used in the learning process of a DNN. We also prove an upper bound on the achievable compression rate of the input in DNNs.

Consider a DNN with $\nu$ layers. Let us assume that $\hat{X} = [X_1, \cdots, X_{\nu}]$ denotes the output of these layers. To find the maximally compressed representation of $X$, we need to minimize the mutual information between the input, i.e. $X$, and the representation of data by the DNN, $\hat{X}$. This can be formulated as minimizing the mutual information between the input and the representation of data in each layer, i.e. $X_i, i=1,\cdots,\nu$ which can be modeled as
\begin{equation}
\min \sum_{i=1}^{\nu}{I(X;X_i)}.
\end{equation}

To measure the fidelity of the training process of a feed-forward DNN, we focus on the Logarithmic-loss distortion function. Let $p(y|x)$ denotes the conditional distribution of $Y$ given $X$ (i.e., the original input data) and similarly, let $p(y|\hat{x})$ denotes the conditional distribution of $Y$ given $\hat{X}$ (i.e., given the outputs of all the layers). Then, one measure of fidelity is the KL-divergence between these distributions:
\begin{equation}
d_{IB}(x,\hat{x}) = \sum_y P(y|x) \log \frac{P(y|x)}{P(y|\hat{x})}
\end{equation} 
and taking the expectation of $d_{IB}(x,\hat{x})$ we get:
\begin{equation}
D_{IB} = E \left[d_{IB}(x,\hat{x}) \right] = I(X;Y|\hat{X})
\end{equation}
Using the Markov chain properties of the feed-forward network, one can show that
\begin{equation}
I(X;Y|\hat{X}) = I(X;Y\,|\,\,[X_1,\cdots,X_{\nu}]) = I(X;Y|X_{\nu})
\end{equation}
and hence, the overall distortion of the DNN will be $D_{IB} = I(X;Y|X_{\nu})$. 
Therefore, using the Information Bottleneck principle, the training criteria for a feed-forward DNN can be formulated as follows
\begin{align}
\label{eq:opt1}
\nonumber
& \quad\quad \min \sum_{i=1}^{\nu}{I(X;X_i)}\\ 
& s.t.\quad I(X;Y|X_{\nu}) \leq \epsilon
\end{align}

As we have mentioned in Section \ref{sec:flow}, in the $i^{th}$ layer, each input combination is mapped to a specific output combination. Therefore, it can be easily shown that $H(X_i|X)=0$ and hence, $I(X;X_i) = H(X_i)$. Then, using Lemma \ref{lm:1} we have:
\begin{equation}
\label{eq:mi1}
I(X;X_i) = H(X) - \sum_{j=1}^{i}\Delta_j.
\end{equation}
Note that in the above equation, $H(X)$ is constant, i.e. it does not depend on the neural network settings.

Regarding the constraint of (\ref{eq:opt1}), it can be shown that
%%PROOF OF WHAT FOLLOWS: 
%%Regarding the constraint inequality of (\ref{eq:opt1}), we have
%%$$
%%I(X;Y|X_{\nu}) = I(Y;X,X_{\nu})-I(Y;X_{\nu})
%%$$
%%and by definition of mutual information we can say
%%\begin{align*}
%%& I(Y;X,X_{\nu}) = H(X,X_{\nu})-H(X,X_{\nu}|Y)\\
%%& = H(X)+H(X_{\nu}|X)-H(X|Y)-H(X_{\nu}|X,Y)\\
%%& = H(X)-H(X|Y) = I(X;Y)
%%\end{align*}
%%Therefore, we have
%%One can show that $I(Y;X,X_{\nu})=I(Y;X)$ and hence,
\begin{equation}
\label{eq:mic}
I(X;Y|X_{\nu}) = I(X;Y)-I(X_{\nu};Y)
\end{equation}
On the other hand, we know that $I(X_{\nu};Y) = H(X_{\nu})-H(X_{\nu}|Y)$ and using Lemma \ref{lm:1} and Lemma \ref{lm:2}, we have
\begin{align*}
\left\{\begin{matrix}
H(X_{\nu}) = H(X) - \sum_{i=1}^{\nu}\Delta_i\\ 
H(X_{\nu}|Y) = H(X|Y) - \sum_{i=1}^{\nu}\Delta_i'
\end{matrix}\right.
\end{align*}
which results in the following equation:
\begin{equation}
\label{eq:mi2}
I(X;Y|X_{\nu}) = \sum_{i=1}^{\nu}\left(\Delta_i - \Delta_i'\right).
\end{equation}

Using (\ref{eq:mi1}) and (\ref{eq:mi2}) and by minor manipulation, the optimization problem of (\ref{eq:opt1}) can be rewritten as follows:
%%\begin{align}
%%\label{eq:opt2}
%%\nonumber
%%& \quad \max \sum_{i=1}^{\nu}\sum_{j=1}^{i}\Delta_j\\ \nonumber
%%& s.t.\\
%%&\quad \sum_{i=1}^{\nu}\left(\Delta_i - \Delta_i'\right) \leq \epsilon
%%\end{align}
%%and by minor manipulation we will have
\begin{align}
\label{eq:opt3}
\nonumber
& \quad\quad \max \sum_{i=0}^{\nu-1}(\nu-i)\Delta_{i+1}\\ 
& s.t.
\quad \sum_{i=1}^{\nu}\left(\Delta_i - \Delta_i'\right) \leq \epsilon
\end{align}
This is a convex optimization problem and due to its complexity, it is generally difficult to find an analytic solution for that. However, numerical solutions (such as algorithms based on Blahut-Arimoto \cite{blahut72}) may be deployed here to solve (\ref{eq:opt3}).

In the optimization problem of (\ref{eq:opt3}), $\Delta_i$ is the amount of entropy reduction in layer $i$ which can be interpreted as the amount of data compression that has been occurred at this layer. Moreover, we can observe that the total data compression (i.e. reduction in entropy) that occurs in DNN is $\sum \Delta_i$ and is defined in the following definition:

\myvspace
\begin{mydef}\label{df:comp}
The \textit{total compression} in a feed forward DNN with $\nu$ layers is illustrated by $\mathsf{C}_{\nu}$ and we have:
\begin{equation*}
\mathsf{C}_{\nu} = \sum_{i=1}^{\nu} \Delta_i
\end{equation*}
\myvspace
\end{mydef}

The following lemma shows an upper bound on $\mathsf{C}_{\nu}$:

\myvspace
\begin{mylemma} \label{lm:3}
In a multilayer neural network with input $X$, output layer $X_{\nu}$, and the desired output $Y$, the maximum possible entropy reduction from the input space to the output space that satisfies the distortion constraint is $H(X|Y) - H(X_{\nu}|Y)$.
\end{mylemma}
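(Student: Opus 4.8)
The plan is to reduce the statement to a single additive decomposition of the total compression $\mathsf{C}_{\nu}$ and then read off the claimed quantity from the distortion constraint. First I would telescope the per-layer identity of Lemma \ref{lm:1}: since $\Delta_i = H(X_{i-1}) - H(X_i)$ with $X_0 = X$, summing over $i = 1,\ldots,\nu$ collapses the sum to $\mathsf{C}_{\nu} = \sum_{i=1}^{\nu}\Delta_i = H(X) - H(X_{\nu})$, which is exactly the relation already recorded in (\ref{eq:mi1}) (taking $i=\nu$ and using $I(X;X_{\nu}) = H(X_{\nu})$). In the same way, telescoping the conditional identity of Lemma \ref{lm:2} gives $\sum_{i=1}^{\nu}\Delta_i' = H(X|Y) - H(X_{\nu}|Y)$.

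Next I would split $\mathsf{C}_{\nu}$ into a relevant and an irrelevant part. Writing $H(X) = I(X;Y) + H(X|Y)$ and $H(X_{\nu}) = I(X_{\nu};Y) + H(X_{\nu}|Y)$ and subtracting yields
\begin{equation*}
\mathsf{C}_{\nu} = \bigl(H(X|Y) - H(X_{\nu}|Y)\bigr) + \bigl(I(X;Y) - I(X_{\nu};Y)\bigr).
\end{equation*}
By the Markov-chain identity already invoked in (\ref{eq:mic}), the second bracket is precisely $I(X;Y|X_{\nu}) = D_{IB}$, so that
\begin{equation*}
\mathsf{C}_{\nu} = \bigl(H(X|Y) - H(X_{\nu}|Y)\bigr) + D_{IB}.
\end{equation*}
Equivalently, this follows from the two telescoped sums together with (\ref{eq:mi2}), since $D_{IB} = \sum_{i}(\Delta_i - \Delta_i')$.

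Finally I would invoke the distortion constraint. The decomposition exhibits the total entropy reduction as the sum of the irrelevant information that is discarded, $H(X|Y) - H(X_{\nu}|Y)$, and the distortion $D_{IB}$, which measures the relevant information that is lost. Since $D_{IB} \ge 0$ with equality exactly in the distortion-free regime, the entropy reduction compatible with the constraint at its most stringent setting $D_{IB} = 0$ is exactly $H(X|Y) - H(X_{\nu}|Y)$; any further reduction must be paid for through a strictly positive distortion term. This identifies $H(X|Y) - H(X_{\nu}|Y)$ as the maximal entropy reduction achievable without incurring distortion, which is the claim.

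The step I expect to be the main obstacle is not any computation but the interpretation of \emph{maximum \ldots that satisfies the distortion constraint}: the content of the lemma is really the additive splitting $\mathsf{C}_{\nu} = (H(X|Y) - H(X_{\nu}|Y)) + D_{IB}$, and one must argue carefully that $H(X|Y) - H(X_{\nu}|Y)$ is precisely the portion of the compression corresponding to shedding information irrelevant to $Y$, so that the constraint $D_{IB} \le \epsilon$ caps exactly the remaining term. Care is also needed in justifying the telescoping across all $\nu$ layers, namely that every intermediate $H(X_i)$ and $H(X_i|Y)$ cancels, and in reusing the Markov-chain simplification (\ref{eq:mic}) that collapses $I(X;Y|\hat{X})$ to $I(X;Y|X_{\nu})$.
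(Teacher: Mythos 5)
Your proposal is correct and follows essentially the same route as the paper: telescope $\sum_i \Delta_i' = H(X|Y) - H(X_{\nu}|Y)$ via Lemma \ref{lm:2} and combine with the distortion constraint $\sum_i(\Delta_i - \Delta_i') \le \epsilon$ taken at $\epsilon \to 0$. Your restatement of this as the exact additive identity $\mathsf{C}_{\nu} = \bigl(H(X|Y) - H(X_{\nu}|Y)\bigr) + D_{IB}$ with $D_{IB} \ge 0$ is just an equivalent (and somewhat more transparent) packaging of the same inequality the paper writes down.
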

\begin{proof}
From the constraint of (\ref{eq:opt3}) we have:
\begin{equation}
\label{eq:ub1}
\sum_{i=1}^{\nu}\Delta_i \leq \epsilon+\sum_{i=1}^{\nu}\Delta_i'
\end{equation}
However, we know that $\Delta_i' = H(X_{i-1}|Y)-H(X_{i}|Y)$. Hence, we have
\begin{align*}
& \sum_{i=1}^{\nu}\Delta_i' = H(X|Y)-H(X_1|Y)+ H(X_1|Y)-H(X_2|Y)  \\
& + H(X_2|Y) - \cdots - H(X_{\nu}|Y) = H(X|Y) - H(X_{\nu}|Y)
\end{align*}
Therefore, using Definition \ref{df:comp} and (\ref{eq:ub1}) and when $\epsilon \rightarrow 0$ we have:
\begin{equation}
\label{eq:ub2}
\mathsf{C}_{\nu} \leq H(X|Y) - H(X_{\nu}|Y)
\end{equation}
This proves the lemma.\qed
\myvspace
\end{proof}

The first consequence of Lemma \ref{lm:3} is that, regardless of the number of layers, $\mathsf{C}_{\nu}$ cannot be greater than $H(X|Y)$. In fact, considering Lemma \ref{lm:2}, $H(X_i|Y)$ is a non-increasing function of $i$, and hence we have:
\begin{equation}
\label{eq:ub3}
\mathsf{C}_{\nu} \leq H(X|Y) 
\end{equation}
However, it should be noted that higher number of layers may result in a more compressed representation of data. In other words, based on Lemma \ref{lm:2}, for $\nu_1 > \nu_2$ we have $H(X_{\nu_1}|Y) \geq H(X_{\nu_2}|Y)$ and hence, $\mathsf{C}_{\nu_1} \leq \mathsf{C}_{\nu_2}$. In the next section, we indicate that due to the structural limitations of an artificial neuron, not all the mappings determined by (\ref{eq:opt3}) can be implemented using one layer. Therefore, in addition to have a more compressed representation of information,  in a neural network multiple layers may be required to achieve feasible mappings from the input space to the output space.

\begin{figure}%[!tbh]
\centering
\begin{tabular}{cc}
\includegraphics[trim=0.1cm 0.1cm 0.1cm 0.1cm, clip=true,width=0.1\textwidth]{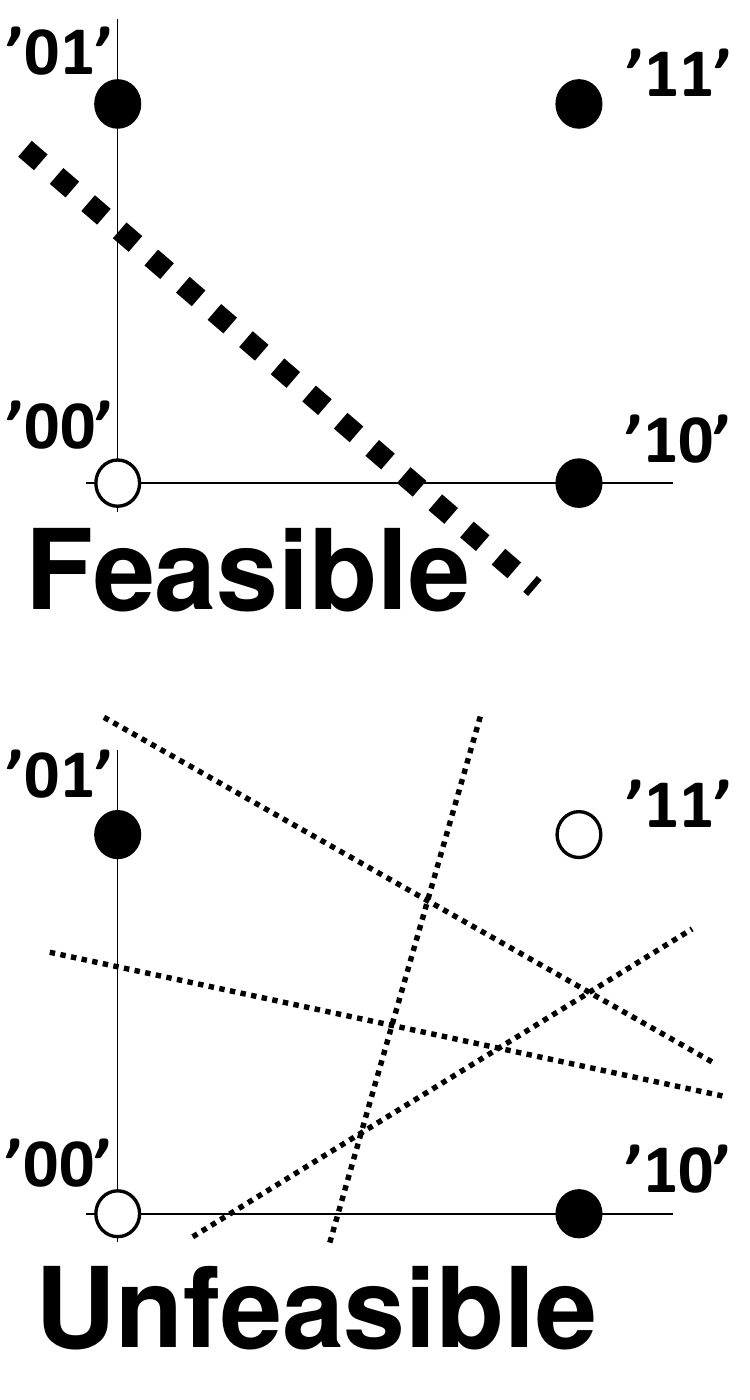} &
\includegraphics[trim=0.1cm 0.1cm 0.1cm 0.1cm, clip=true,width=0.26\textwidth]{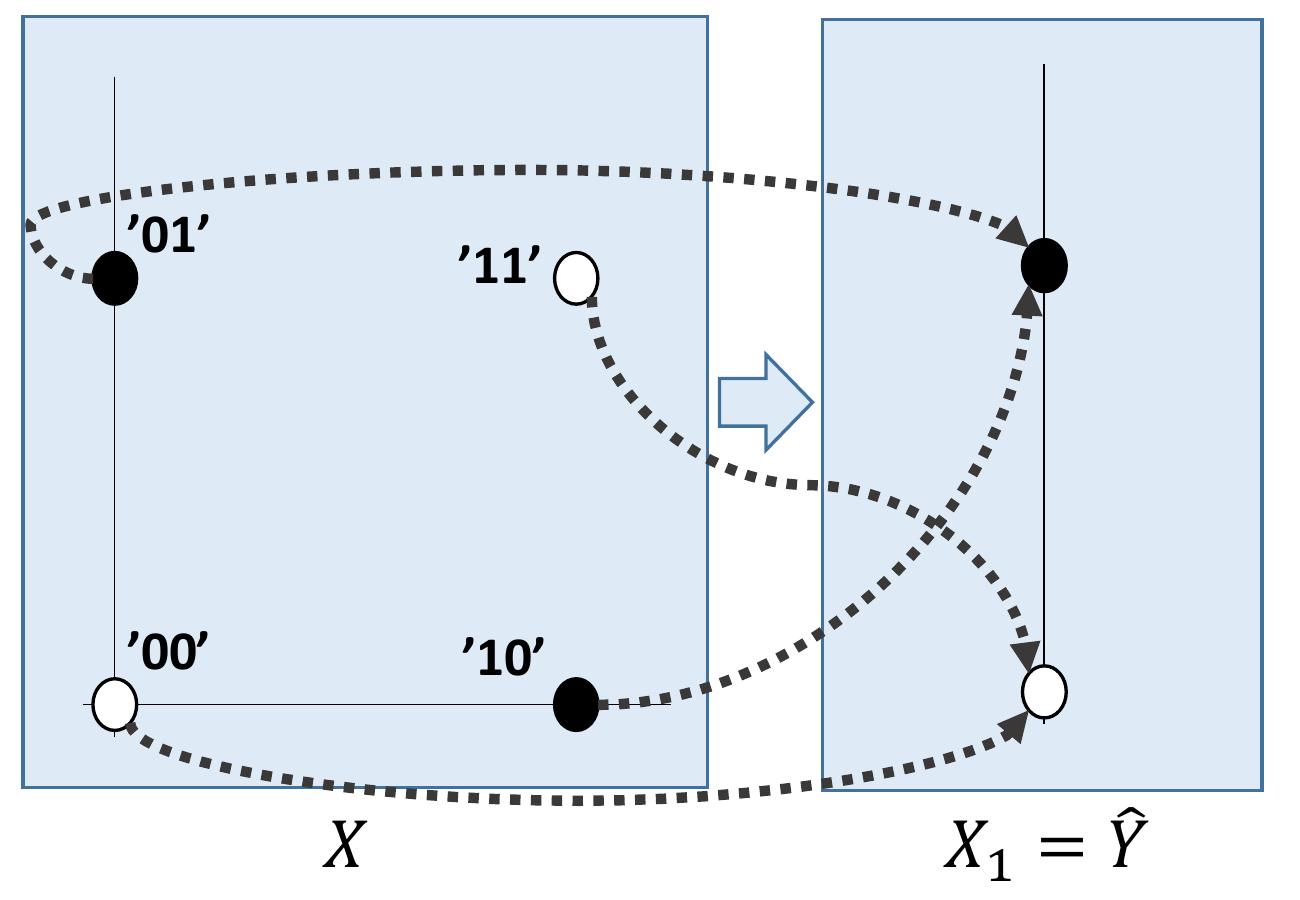}\\
(a) & (b) 
\end{tabular}
\vspace{-1mm}
\caption{(a) Examples of feasible and unfeasible mappings. Black and white circles are mapped to outputs '1' and '0' respectively. (b) Unfeasible mappings resulted from the optimization problem of (\ref{eq:opt3}) with a single neuron.}
\label{fig:unfeasible}
\vspace{-15pt}
\end{figure}

\section{Feasible Optimal Mappings}
\label{sec:feaopt}
While the optimization problem of (\ref{eq:opt3}) can be used to find the optimal mappings between consecutive layers of a neural network, it may result in unfeasible solutions. As a matter of fact, a single neuron implements a single hyperplane in the input space and hence, only linearly separable classification problems may be solved with a single neuron. As an example, in binary input/output space (i.e. when the inputs and the output of the neuron are binary), an XOR function cannot be implemented with a single neuron. Examples of feasible and unfeasible mappings with a single neuron are illustrated in Fig. \ref{fig:unfeasible}(a). In this figure, black and white circles are mapped to outputs '1' and '0', respectively. However, a single neuron can only divide the space into two parts and hence, the top mapping (i.e. boolean OR function) can be implemented by a single neuron while the bottom mapping (i.e. boolean XOR function) is not implementable.
The unfeasible mappings which are resulted from the optimization problem of (\ref{eq:opt3}) for a boolean XOR function are illustrated in Fig. \ref{fig:unfeasible}(b). Therefore, we need to add more constraints to the above optimization problems to exclude the unfeasible mappings.

Using (\ref{eq:delta}) and the definition of $H(S_j^n)$ we can show that
\begin{equation}
\label{eq:newdelta}
\Delta_n = \sum_{j=1}^{k_n}\sum_{\mathbf{x}\in S^n_j} P_{n-1}(\mathbf{x}) \log \frac{P_n(\mathbf{x_j^n})}{P_{n-1}(\mathbf{x})}
\end{equation}
%%>> sequence of equations for the above equation:
%%\Delta_n = \sum_{j=1}^{k_n}P_n(\mathbf{x^{n}_j})H(S_j^n)\\
%%\Delta_n = -\sum_{j=1}^{k_n}P_n(\mathbf{x^{n}_j})\left [ \sum_{\mathbf{x}\in S^n_j} \frac{P_{n-1}(\mathbf{x})}{P_n(\mathbf{x_j^n})} \log \frac{P_{n-1}(\mathbf{x})}{P_n(\mathbf{x_j^n})} \right ] \\
%%\Delta_n = -\sum_{j=1}^{k_n}\sum_{\mathbf{x}\in S^n_j}P_n(\mathbf{x^{n}_j})\frac{P_{n-1}(\mathbf{x})}{P_n(\mathbf{x_j^n})} \log \frac{P_{n-1}(\mathbf{x})}{P_n(\mathbf{x_j^n})}\\
%%\Delta_n = -\sum_{j=1}^{k_n}\sum_{\mathbf{x}\in S^n_j} P_{n-1}(\mathbf{x}) \log \frac{P_n(\mathbf{x_j^n})}{P_{n-1}(\mathbf{x})}
Let us define the following parameter:
\begin{equation}
\theta_{ij}^n=\left \{ \begin{matrix}
1 \quad , \quad \mathbf{x_i^{n-1}}\in \mathbf{S}_j^n\\ 
0 \quad , \quad  Otherwise
\end{matrix} \right.
\end{equation}
where, $\sum_{j=1}^{k_n}\theta_{ij}=1$. Moreover, in vector notations, $\Theta_n$ is a $k_{n-1}\times k_n$ matrix such that $\Theta_n[i,j]=\theta_{ij}^n$.
Then, (\ref{eq:newdelta}) can be written as follows
\begin{equation}
\label{eq:newdelta2}
\Delta_n = \sum_{j=1}^{k_n}\sum_{i=1}^{k_{n-1}} \theta_{ij}^n P_{n-1}(\mathbf{x_i^{n-1}}	) \log \frac{P_n(\mathbf{x_j^n})}{P_{n-1}(\mathbf{x_i^{n-1}})}
\end{equation}
Furthermore, using a similar notation, it can be shown that 
\begin{align}
\label{eq:newdeltap2}
\nonumber
\Delta_n'= &\sum_{y\in \mathcal{Y}}\sum_{j=1}^{k_n}\sum_{i=1}^{k_{n-1}}\theta_{ij}P_Y(y) P_{n-1}(\mathbf{x_i^{n-1}}|Y=y)\\
&\log \frac{P_{n}(\mathbf{\mathbf{x}_j^n}|Y=y)}{P_{n-1}(\mathbf{x_i^{n-1}}|Y=y)}
\end{align}
%%>> Proof of the above equation:
%%H(S^n_j|Y=y) = -\sum_{i=1}^{k_{n-1}} \theta_{ij}\left( \frac{P_{n-1}(\mathbf{x_i^{n-1}}|Y=y)}{P_{n}(\mathbf{\mathbf{x}_j^n}|Y=y)} \right)
%%\log \left( \frac{P_{n-1}(\mathbf{x_i^{n-1}}|Y=y)}{P_{n}(\mathbf{\mathbf{x}_j^n}|Y=y)} \right)
%%\\
%%\Delta_n' = H(X_{n-1}|Y)-H(X_n|Y)\\
%%= \sum_{y\in \mathcal{Y}}{\sum_{j=1}^{k_n}{P_Y(y)P_n(\mathbf{x^n_j}|Y=y) H(S^n_j|Y=y)}}\\
%%= -\sum_{y\in \mathcal{Y}}{\sum_{j=1}^{k_n}\sum_{i=1}^{k_{n-1}}{P_Y(y)P_n(\mathbf{x^n_j}|Y=y) \theta_{ij}\left( \frac{P_{n-1}(\mathbf{x_i^{n-1}}|Y=y)}{P_{n}(\mathbf{\mathbf{x}_j^n}|Y=y)} \right)
%%\log \left( \frac{P_{n-1}(\mathbf{x_i^{n-1}}|Y=y)}{P_{n}(\mathbf{\mathbf{x}_j^n}|Y=y)} \right)}}\\
%%= \sum_{y\in \mathcal{Y}}{\sum_{j=1}^{k_n}\sum_{i=1}^{k_{n-1}}{\theta_{ij}P_Y(y) P_{n-1}(\mathbf{x_i^{n-1}}|Y=y)
%%\log \frac{P_{n}(\mathbf{\mathbf{x}_j^n}|Y=y)}{P_{n-1}(\mathbf{x_i^{n-1}}|Y=y)}}}\\

As we mentioned before, not all the mappings between the inputs and outputs of a neuron are feasible. Unfeasible mappings depend on the structure of the network, number of neurons in each layer, and the corresponding output functions in each layer. Let us assume that at the $n^{th}$ layer, $\mathbf{\tilde{\Theta}_n}=\{\tilde{\Theta}_{1,n},\cdots,\tilde{\Theta}_{{\kappa_n},n}\}$ is the set of forbidden mappings. Then, the optimization problem of (\ref{eq:opt3}) can be modified as follows:
\begin{align}
\label{eq:opt5}
\nonumber
& \max \sum_{n=0}^{\nu-1} \sum_{j=1}^{k_{n+1}}\sum_{i=1}^{k_{n}}(\nu-n)  \theta_{ij}^{n+1} P_{n}(\mathbf{x_i^{n}}) \log \frac{P_{n+1}(\mathbf{x_j^{n+1}})}{P_{n}(\mathbf{x_i^{n}})}\\ \nonumber
& s.t. \\ \nonumber
&  \sum_{n=1}^{\nu}\sum_{y\in \mathcal{Y}}\sum_{j=1}^{k_n}\sum_{i=1}^{k_{n-1}}
\theta_{ij}^n P_Y(y) \left(P_{n-1}(\mathbf{x_i^{n-1}}) \log \frac{P_n(\mathbf{x_j^n})}{P_{n-1}(\mathbf{x_i^{n-1}})} \right.\\ \nonumber
& \left. \quad \quad \quad - P_{n-1}(\mathbf{x_i^{n-1}}|Y=y) \log \frac{P_{n}(\mathbf{\mathbf{x}_j^n}|Y=y)}{P_{n-1}(\mathbf{x_i^{n-1}}|Y=y)}\right) \leq \epsilon \\ \nonumber
& \sum_{j=1}^{k_n} \theta_{ij}^n = 1 \quad , \quad  n=1,\cdots,\nu \quad, \quad i=1,\cdots,k_{n-1} \\
& \Theta_n \neq \tilde{\Theta}_{i,n} \quad , \quad  n=1,\cdots,\nu \quad, \quad i=1,\cdots,\kappa_n
\end{align}
where the solution to (\ref{eq:opt5}) is the set of $\theta_{ij}$'s. Note that the last statement is used to exclude the forbidden mappings from the set of solutions. The optimization problem of (\ref{eq:opt5}) finds the optimal mappings between any two consecutive layers in a feed-forward DNN. These mappings can then be implemented by proper selection of neuron weights.

\section{Case Study: Boolean Functions}
In this section, we perform a case study to observe how the proposed optimization problem of (\ref{eq:opt5}) may be used to determine optimal mappings between consecutive layers in a DNN. For this study, we try to implement basic boolean functions and  show how entropy changes from the input to the output layer. In this set of experiments we use AND, OR, and XOR functions with two and three inputs and we assume that $\nu \in \{1,2,3\}$. Results are illustrated in Table \ref{tbl:1}. It is clear from these results that feasible mappings cannot be determined for two and three input XOR functions when $\nu=1$.  As we have mentioned before, this is due to the processing limitations of a single neuron. However, for AND and OR functions even one single neuron was able to implement the function. Figure \ref{fig:2} shows an example set of mappings between consecutive layers for XOR function using a two-layer neural network.

Table \ref{tbl:1} also illustrates the achievable compression rate, i.e. $\mathsf{C}_{\nu}$, and its corresponding upper bound, i.e. $H(X|Y)$. As we can observe, in the cases that the function was implementable using the neural network, $\mathsf{C}_{\nu}$ is equal to $H(X|Y)$, which means that for these functions we have been able to achieve the minimum representation of data at the output layer. As we proved in Lemma \ref{lm:3}, we observe that the maximum achievable level of data compression in a feed-forward DNN is $H(X|Y)$. Moreover, results indicate that the main reason to add an extra layer to a DNN is to achieve feasible mappings. However, as Lemma \ref{lm:3} shows, extra layers may lead to a more compressed representation of the input data.

\begin{table}
\caption{Case study results for boolean functions.} 
\begin{center}
\scalebox{0.8}{
  \begin{tabular}{ l | c | c | c | c | c | c | }
    \hline
    \hline
    Function & $\nu$ & Neurons  & Solution  & Optimization  & $\mathsf{C}_{\nu}$ & $H(X|Y)$ \\
    (Inputs) & & per Layer & Exists? & Function & & \\ \hline
    AND (2) & 1 & [1] & Yes & 1.189 & 1.189 & 1.189  \\ \hline
    OR (2)$^*$ & 1 & [1] & Yes & 0.888 & 0.888 & 0.888  \\ \hline
    XOR (2) & 1 & [1] & No & - & - & -  \\ \hline
    AND (2) & 2 & [2 1] & Yes & 2.377 & 1.189 & 1.189  \\ \hline
    XOR (2) & 2 & [2 1] & Yes & 1.500 & 1.000 & 1.000  \\ \hline
    AND (2) & 3 & [2 2 1] & Yes & 3.566 & 1.189 & 1.189  \\ \hline
    XOR (2) & 3 & [2 2 1] & Yes & 2.500 & 1.000 & 1.000  \\ \hline
    AND (3) & 1 & [1] & Yes & 2.456 & 2.456 & 2.456  \\ \hline
    XOR (3) & 1 & [1] & No & - & - & -  \\ \hline
    \hline
  \end{tabular}
  }
\end{center}
 \scalebox{0.65}{
  (*) Distribution is not uniform: Probability of $00$ is 0.7 and others are 0.1.}
  \label{tbl:1}
\end{table}

\begin{figure}[t]
\centering
\includegraphics[trim=0.1cm 0.1cm 0.1cm 0.1cm, clip=true,width=0.38\textwidth]{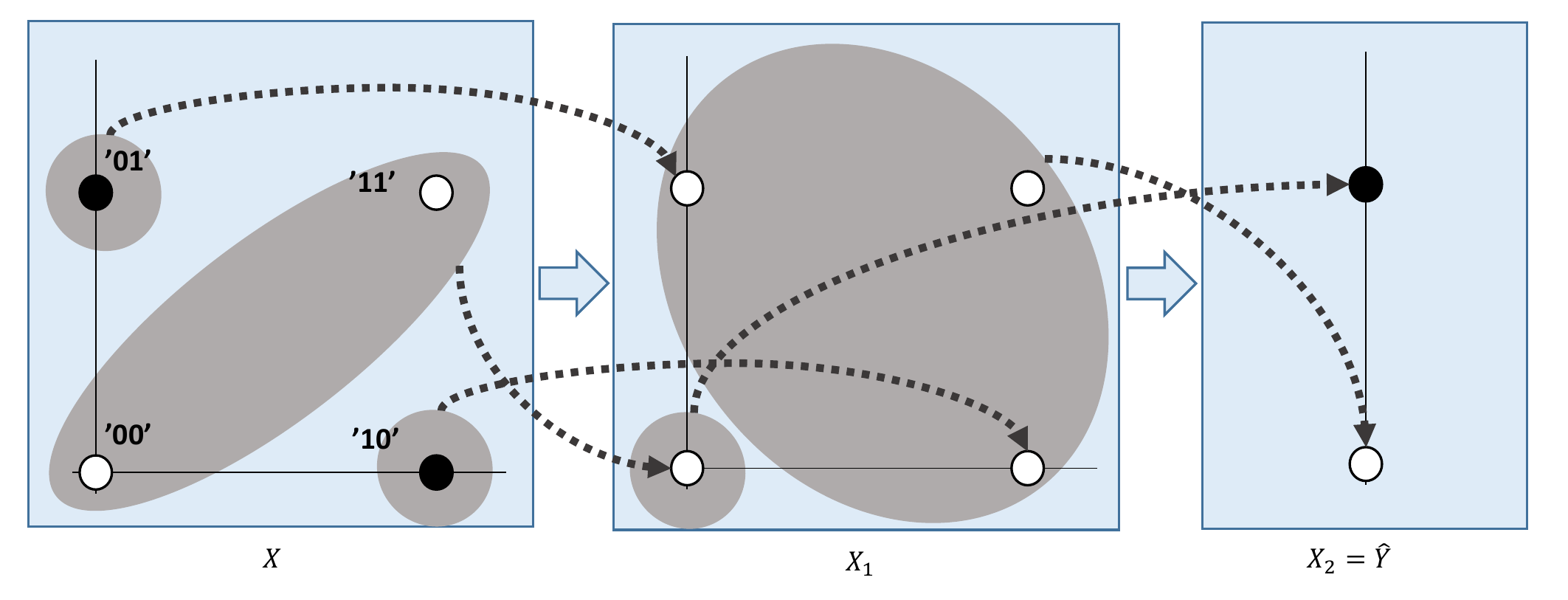}
\vspace{-2mm}
\caption{An example sequence of mappings for XOR.}
\label{fig:2}
\vspace{-3mm}
\end{figure}

\section{Conclusion}
In this paper, we used information theory methods to study the flow of information in DNNs. We determined how entropy and conditional entropy of information changes between consecutive layers and using the Information Bottleneck principle we modeled the learning process of a feed-forward neural network as a constrained optimization problem. Furthermore, we proved an upper bound for the total compression rate of information that can be achieved in a neural network while the overall distortion in the output layer with respect to a desired output is in an acceptable range. In this paper, we assumed that the neural network is used for supervised learning tasks and the input/output spaces are based on discrete alphabets. For the future work, we aim to extend our work to a broader range of learning problems and to include continues input/output spaces in our model.

\bibliographystyle{IEEEbib}
\bibliography{itw}

\end{document}